\documentclass[12pt]{article}%

%%%%%%%%%%%%%%%%%%%%%%%%%%%%%%%%%%%%%%%%%%%%%%%%%%%%%%%%%%%%%%%%%%%
% Biblatex....
%
\providecommand{\BibLatexMode}[1]{}
\providecommand{\BibTexMode}[1]{#1}

\ifx\UseBibLatex\undefined%
  \renewcommand{\BibLatexMode}[1]{}
  \renewcommand{\BibTexMode}[1]{#1}
\else
  \renewcommand{\BibLatexMode}[1]{#1}
  \renewcommand{\BibTexMode}[1]{}
\fi

% Bib latex stuff
\BibLatexMode{%
   \usepackage[bibencoding=utf8,style=alphabetic,%
   backend=biber,sortlocale=en_US]{biblatex}%
   \usepackage[english]{babel}
   \usepackage{csquotes}
   \usepackage{master}%
}

% 
%%%%%%%%%%%%%%%%%%%%%%%%%%%%%%%%%%%%%%%%%%%%%%%%%%%%%%%%%%%%%%%%%%%

\usepackage[cm]{fullpage}
\usepackage{url, graphicx}%
\usepackage{amsmath}%
\usepackage{amssymb}%
\usepackage{amsbsy}%

\usepackage[full,langfont=caps,funcfont=roman]{complexity}
\usepackage{mleftright}

\newlang{\FVS}{FVS}
\newlang{\PFVS}{PairedFVS}
\newlang{\VtC}{VC}
\newlang{\TSAT}{3\text{-}SAT}

\usepackage{wasysym}%
\usepackage{picins}%
\usepackage{xcolor}
\usepackage{xspace}%
\usepackage{marvosym}%
\usepackage{euscript}%
\usepackage{caption}%
\usepackage{scalerel}
\usepackage{mathcommand}

% For writing linear programs
\usepackage[short,nocomma]{optidef}

% Stabilizes footnotes, allows footnotes in section titles
\usepackage{footnote}
\usepackage[stable]{footmisc}

% Fonts (comment out if you get errors)
\RequirePackage[charter]{mathdesign}
\RequirePackage[scaled]{berasans}
\RequirePackage[semibold]{sourcecodepro}
\RequirePackage[scaled=.96,osf,sups]{XCharter}

% Fix math fonts
\SetMathAlphabet{\mathsf}{bold}{\encodingdefault}{\sfdefault}{b}{\updefault}
\SetMathAlphabet{\mathtt}{bold}{\encodingdefault}{\ttdefault}{b}{\updefault}
\SetMathAlphabet{\mathsf}{normal}{\encodingdefault}{\sfdefault}{\mddefault}{\updefault}
\SetMathAlphabet{\mathtt}{normal}{\encodingdefault}{\ttdefault}{\mddefault}{\updefault}

% Symbols and typesetting
\RequirePackage{stmaryrd,textcomp}
\RequirePackage{microtype}

\usepackage{mathcalb}%

% Editing commands
\newlength{\savedparindent}

\newcommand{\SaveContent}[2]{%
   \expandafter\newcommand{#1}{#2}%
}

\newcommand{\RestatementOf}[2]{
   \noindent%
   \textbf{Restatement of #1.}
   {\em #2{}}%
}

%%%%%%%%%%%%%%%%%%%%%%%%%%%%%%%%%%%%%%%%%%%%%%%%%%%%%%%%%%%%%%%%%%%%%%%%
% Hyperref

\definecolor[named]{MJBlue}{cmyk}{1,0.1,0,0.1}
\definecolor[named]{MJYellow}{cmyk}{0,0.16,1,0}
\definecolor[named]{MJOrange}{cmyk}{0,0.42,1,0.01}
\definecolor[named]{MJRed}{HTML}{903838}
\definecolor[named]{MJLightBlue}{cmyk}{0.49,0.01,0,0}
\definecolor[named]{MJGreen}{HTML}{248a6d}
\definecolor[named]{MJPurple}{cmyk}{0.55,1,0,0.15}
\definecolor[named]{MJDarkBlue}{cmyk}{1,0.58,0,0.21}
\definecolor[named]{MJLightGrey}{HTML}{829197}

\usepackage{hyperref}%
\hypersetup{%
  breaklinks,%
  colorlinks=true,%
  urlcolor=MJGreen,
  linkcolor=MJDarkBlue,
  citecolor=MJPurple
}
   
% Hyperref - end
%%%%%%%%%%%%%%%%%%%%%%%%%%%%%%%%%%%%%%%%%%%%%%%%%%%%%%%%%%%%%%%%%%%%%%%%

\usepackage{titlesec}
\titlelabel{\thetitle. }

%%%%%%%%%%%%%%%%%%%%%%%%%%%%%%%%%%%%%%%%%%%%%%%%%%%%%%%%%%%%%%%%%%%%%%%%%
% Defining comptenum environment using enumitem
\usepackage[inline]{enumitem}

\newlist{compactenumA}{enumerate}{5}%
\setlist[compactenumA]{topsep=0pt,itemsep=-1ex,partopsep=1ex,parsep=1ex,%
   label=(\Alph*)}%

\newlist{compactenuma}{enumerate}{5}%
\setlist[compactenuma]{topsep=0pt,itemsep=-1ex,partopsep=1ex,parsep=1ex,%
   label=(\alph*)}%

\newlist{compactenumI}{enumerate}{5}%
\setlist[compactenumI]{topsep=0pt,itemsep=-1ex,partopsep=1ex,parsep=1ex,%
   label=(\Roman*)}%

\newlist{compactenumi}{enumerate}{5}%
\setlist[compactenumi]{topsep=0pt,itemsep=-1ex,partopsep=1ex,parsep=1ex,%
   label=(\roman*)}%

\newlist{compactitem}{itemize}{5}%
\setlist[compactitem]{topsep=0pt,itemsep=-1ex,partopsep=1ex,parsep=1ex}%

\newlist{ProblemList}{enumerate}{5}%
\setlist[ProblemList]{topsep=5pt,%
   partopsep=0.03ex,parsep=0.365ex,%
   resume=problemlist,leftmargin=0.6in,%
   label=\bf\color{MJRed}{Prob. \arabic*}:, ref=\arabic*}

%%%%%%%%%%%%%%%%%%%%%%%%%%%%%%%%%%%%%%%%%%%%%%%%%%%%%%%%%%%%%%%%%%%%%%%%%%

\setcounter{secnumdepth}{5}

%%%%%%%%%%%%%%%%%%%%%%%%%%%%%%%%%%%%%%%%%%%%%%%%%%%%%%%%%%%%%%%%%%%%%%
%%%%%%%%%%%%%%%%%%%%%%%%%%%%%%%%%%%%%%%%%%%%%%%%%%%%%%%%%%%%%%%%%%%%%%
%%%%%%%%%%%%%%%%%%%%%%%%%%%%%%%%%%%%%%%%%%%%%%%%%%%%%%%%%%%%%%%%%%%%%%
%    Handling references

\newcommand{\HLinkShort}[2]{\hyperref[#2]{#1\ref*{#2}}}
\newcommand{\HLink}[2]{\hyperref[#2]{#1~\ref*{#2}}}
\newcommand{\HLinkPage}[2]{\hyperref[#2]{#1~\ref*{#2}%
      $_\text{p\pageref{#2}}$}}
\newcommand{\HLinkPageOnly}[1]{\hyperref[#1]{Page~\refpage*{#1}%
      $_\text{p\pageref{#1}}$}}

\newcommand{\HLinkSuffix}[3]{\hyperref[#2]{#1\ref*{#2}{#3}}}
\newcommand{\HLinkPageSuffix}[3]{\hyperref[#2]{#1\ref*{#2}%
      #3$_\text{p\pageref{#2}}$}}

\newcommand{\figlab}[1]{\label{fig:#1}}
\newcommand{\figref}[1]{\HLink{Figure}{fig:#1}}

\newcommand{\defrefY}[2]{\hyperref[def:#2]{#1}}

\newcommand{\lemlab}[1]{\label{lemma:#1}}
\newcommand{\lemref}[1]{\HLink{Lemma}{lemma:#1}}%

\newcommand{\problab}[1]{\label{prob:#1}}%
\newcommand{\probref}[1]{\HLink{Problem}{prob:#1}}%

\newcommand{\thmlab}[1]{{\label{theo:#1}}}
\newcommand{\thmref}[1]{\HLink{Theorem}{theo:#1}}

\providecommand{\eqlab}[1]{}%
\renewcommand{\eqlab}[1]{\label{equation:#1}}

\newcommand{\lpref}[1]{(\hyperref[lp:#1]{\ref*{lp:#1}})}

%----    Handling references - end
%%%%%%%%%%%%%%%%%%%%%%%%%%%%%%%%%%%%%%%%%%%%%%%%%%%%%%%%%%%%%%%%%%
%%%%%%%%%%%%%%%%%%%%%%%%%%%%%%%%%%%%%%%%%%%%%%%%%%%%%%%%%%%%%%%%%%

%%%%%%%%%%%%%%%%%%%%%%%%%%%%%%%%%%%%%%%%%%%%%%%%%%%%%%%%%%%%%%%%%%%%%%%%
%%%%%%%%%%%%%%%%%%%%%%%%%%%%%%%%%%%%%%%%%%%%%%%%%%%%%%%%%%%%%%%%%%%%%%%%
% Defining theorem like environments
%%%%%%%%%%%%%%%%%%%%%%%%%%%%%%%%%%%%%%%%%%%%%%%%%%%%%%%%%%%%%%%%%%%%%%%%
%%%%%%%%%%%%%%%%%%%%%%%%%%%%%%%%%%%%%%%%%%%%%%%%%%%%%%%%%%%%%%%%%%%%%%%%

\usepackage[amsmath,thmmarks,framed]{ntheorem}%
\theoremseparator{.}%

\theoremstyle{plain}%
\newtheorem{theorem}{Theorem}%[subsection]
%[subsection]%

\newtheorem{lemma}[theorem]{Lemma}

\theoremstyle{plain}%
\theoremheaderfont{\sf}
\theorembodyfont{\upshape}%
\newtheorem*{remark:unnumbered}[FakeCounter]{Remark}%

\newtheorem*{defn:unnumbered}[FakeCounter]{Definition}

\newtheorem{problem}{Problem}

% Proof environment
\newcommand{\myqedsymbol}{\rule{2mm}{2mm}}

\theoremheaderfont{\em}%
\theorembodyfont{\upshape}%
\theoremstyle{nonumberplain}%
\theoremseparator{}%
\theoremsymbol{\myqedsymbol}%
\newtheorem{proof}{Proof:}%

%

% theorem block end
%%%%%%%%%%%%%%%%%%%%%%%%%%%%%%%%%%%%%%%%%%%%%%%%%%%%%%%%%%%%%%%%%%%%

\newcommand{\emphic}[2]{%
   \textcolor{MJRed}{%
      \textbf{\emph{#1}}}%
   \index{#2}}

\newcommand{\emphi}[1]{\emphic{#1}{#1}}

%%%%%%%%%%%%%%%%%%%%%%%%%%%%%%%%%%%%%%%%%%%%%%%%%%%%%%%%%%%%%%%%%%%
% Author thanks
%%%%%%%%%%%%%%%%%%%%%%%%%%%%%%%%%%%%%%%%%%%%%%%%%%%%%%%%%%%%%%%%%%%

\newcommand{\atgen}{\symbol{'100}}

\newcommand{\MitchellThanks}[1]{%
   \thanks{%
      Department of Computer Science;
      University of Illinois; 201 N. Goodwin Avenue; Urbana, IL,
      61801, USA; {\tt mfjones2\atgen{}illinois.edu}; {\tt
         \url{http://mfjones2.web.engr.illinois.edu/}.} #1}}

%%%%%%%%%%%%%%%%%%%%%%%%%%%%%%%%%%%%%%%%%%%%%%%%%%%%%%%%%%%%%%%%%%%%%%%%
%%%%%%%%%%%%%%%%%%%%%%%%%%%%%%%%%%%%%%%%%%%%%%%%%%%%%%%%%%%%%%%%%%%%%%%%
% Notation
%%%%%%%%%%%%%%%%%%%%%%%%%%%%%%%%%%%%%%%%%%%%%%%%%%%%%%%%%%%%%%%%%%%%%%%%
%%%%%%%%%%%%%%%%%%%%%%%%%%%%%%%%%%%%%%%%%%%%%%%%%%%%%%%%%%%%%%%%%%%%%%%%

%

\newcommand{\pth}[1]{\mleft({#1}\mright)}

\newcommand{\set}[1]{\mleft\{#1\mright\}}
\newcommand{\Set}[2]{\mleft\{{#1}\mid{#2}\mright\}}

\newcommand{\etal}{{e{}t~a{}l.}\xspace}

\newcommand{\cardin}[1]{\left\vert {#1}  \right\vert}

\newcommand{\eps}{\varepsilon}

% Vectors, sets, and number sets.

\LoopCommands\lettersLowercase[v#1]{\newcommand#2{\boldsymbol#1}}

\LoopCommands{ABCDEFGIJKLMNOPQRTVWXYZ}[#1S]{\newcommand#2{#1}}

\LoopCommands{ACFGINPQZ}[#1e]{\newmathcommand#2{\mathbb#1}}
\LoopCommands{R}[#1e]{\renewmathcommand#2{\mathbb#1}}

% Probability

% Point sets, points, and vertices notation

%
%
%
%
%

%

%
%
%
%
%
%
%
%

%
%

%

%
%\

% Geometry

%
%
%
%
%
%

%

%

%
%

%

%

%

%

% Yolk notation

%

%
%

%
%

%

\newcommand{\ArrX}[1]{\mathcal{A}\pth{#1}}

%

%

%

%

%

% Data structures

%

% Function notation

%

%

% LP and LP-type notation

%
%
%

%
%
%
%
%
%

%

%
%
%
%

%
%
%
%
%
%
%

% resistance paper notation

% random_sep paper notation

% yolk paper notation

% other notation

% convex_net paper notation

%
%
%
%

% People

% Miscellaneous

%

% Paper specific notation

\newcommand{\cn}{\mu}

%\graphicspath{{figs/}}
	
\BibLatexMode{%
  \bibliography{master}%
}

\title{Cutting cycles of rods in space is FPT}
\author{
  Mitchell Jones\MitchellThanks{}%
}

\begin{document}

\maketitle

\begin{abstract}
  In this short note, we show that cutting cycles of rods is 
  fixed-parameter tractable by reducing the problem to computing a 
  feedback vertex set in a mixed graph.
\end{abstract}

\section{Introduction}

Let $S$ be a collection of $n$ non-vertical segments in $\Re^3$. For
two segments $s, s' \in S$, we say that $s \succ s'$ if and only if
$s$ \emph{lies above} $s'$. Formally, if $\ell$ is the vertical line
that meets $s$ and $s'$, then $s \succ s'$ if and only if the
$z$-coordinate of the point $\ell \cap s$ is larger than the
$z$-coordinate of the point $\ell \cap s'$. This induces a relation
$\succ$ on the segments in $S$ (which may not be transitive). In
general, this relation may contain what we call \emphi{depth cycles}.  
By \emph{cutting} the segments of $S$ into multiple smaller segments, we
obtain a new collection of segments $S'$. The goal is to cut the
segments of $S$ such that the relation for the segments $S'$ is a
partial ordering (i.e., there are no depth cycles).  The minimum
number of such cuts needed is called the \emphi{cutting number} of
$S$. See \figref{rods}. Often segments are referred to as rods, and we
use these terms interchangeably.

\begin{figure}[t]
  \centering%
  \includegraphics[page=1,scale=0.8]{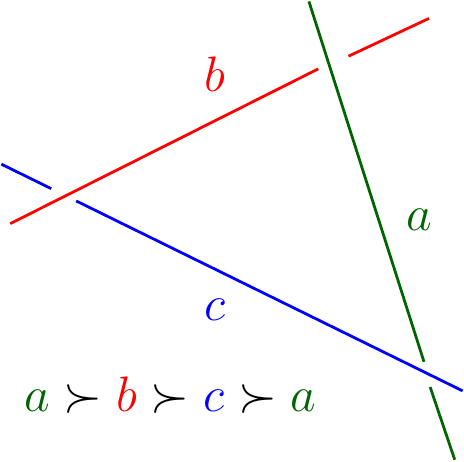}%
  \hspace{25pt}{\color{MJLightGrey}\vrule}\hspace{25pt}%
  \includegraphics[page=2,scale=0.8]{figs/rods-ex}%
  \caption{A set of segments inducing a depth cycle (left) and 
  the resulting depth ordering after making a single cut (right).}
  \figlab{rods}
\end{figure}

To summarize, the problem is to make as few cuts as possible to the 
rods (each rod may be cut in multiple different places) such that the
resulting depth ordering is acyclic. In this note, we
study the parameterized version of the problem.

\begin{problem}
  \problab{fpt:rods}
  Given a set $S$ of $n$ non-vertical segments in $\Re^3$ and a
  parameter $k$, is there an algorithm running in time $f(k) n^{O(1)}$
  (where $f$ is some computable function) which decides if all depth
  cycles can be eliminated using at most $k$ cuts?
\end{problem}

This is equivalent to asking if the above problem is 
\emphi{fixed-parameter tractable} ($\FPT$) by the solution size.
See \cite{fg-pct-06, cfk+-pa-15} for a reference on parameterized
algorithms and complexity.

\subsection{Previous work}
Cutting cycles of rods has applications in hidden surface removal and
computer graphics (see discussions in \cite{aks-ctcls-05} and
\cite[Chapter 12]{dbcvko-cgaa-08}).  The problem has been studied from
both a combinatorial \cite{aks-ctcls-05, as-edctd-18} and
computational \cite{s-ccrs-98, hs-oplpaa-01, adbgm-ccrsha-08}
viewpoint. Here, we focus on the computational aspect of the
problem.

For a given collection of segments $S$ in $\Re^3$ with
cutting number $\cn$, Solan \cite{s-ccrs-98} was the first
to develop an output sensitive algorithm for the problem, producing
a set of $O(n^{1+\eps}\cn^{1/2})$ cuts in time $O(n^{4/3+\eps}\cn^{1/3})$,
for any given $\eps > 0$. One can think of this as a 
$O(n^{1+\eps}\cn^{-1/2})$-approximation algorithm. This
was further improved to a 
$O(n\alpha(n)\cn^{-1/2} \log n)$-approximation\footnote{$\alpha(n)$ 
is the inverse Ackermann function.}
algorithm with expected running time $O(n^{4/3+\eps}\cn^{1/3})$ 
\cite{hs-oplpaa-01}. Aronov \etal~\cite{adbgm-ccrsha-08} obtain a 
$O(\log\cn\log\log\cn)$-approximation algorithm in 
$O(n^{4+2\omega}\log^2 n)$ time\footnote{Here, $\omega < 2.37$ 
is the matrix multiplication constant for multiplying two 
$n \times n$ matrices.}, 
by reducing the problem to computing a feedback vertex set in 
a directed graph.

As far as the author is aware, the problem of cutting cycles
of rods has not been studied from a parameterized perspective.

\paragraph*{Relation to feedback vertex set.}
Let $G = (V,A)$ be a directed graph with $n$ vertices $V$ and
arc set $A$. Recall that a subset of vertices $U \subseteq V$
form a \emphi{feedback vertex set} ($\FVS$) if $G - U$ is
acyclic. It is known that $\FVS \in \FPT$ when parameterized 
by the solution size $k$ \cite{cllor-fptdfvs-08}, running in
$O(4^k k! k^4 n^3)$ time.

In \cite{adbgm-ccrsha-08}, Aronov \etal~reduce the
problem to computing the feedback vertex set in a directed graph 
$G$. In particular, the authors show that
if all depth cycles in $S$ can be eliminated with $k$ cuts, then 
$G$ admits a FVS of size at most $2k$. Unfortunately the reverse
implication does not hold in their reduction. Since $\FVS \in \FPT$,
the best one can hope for is a 2-approximation algorithm
to \probref{fpt:rods} running in $O(4^k k! k^4 n^3)$ time, using 
their reduction and the result of Chen \etal \cite{cllor-fptdfvs-08}.

\subsection{Our results}

We prove that \probref{fpt:rods} is fixed-parameter tractable
by the solution size $k$.

\SaveContent{\ThmFPTRods}{
  Let $S$ be a collection of $n$ non-vertical segments in $\Re^3$.
  Given an integer parameter $k$, there is an algorithm which
  decides if all depth cycles of $S$ can be eliminated with at most
  $k$ cuts, and runs in time $O(47.5^k k! n^8)$.
}

\begin{theorem}
  \thmlab{fpt:rods}
  \ThmFPTRods{}
\end{theorem}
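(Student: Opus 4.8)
The plan is to reduce \probref{fpt:rods} to computing a feedback vertex set in a suitable \emph{mixed} graph (a graph carrying both directed arcs and undirected edges) and then to invoke a known fixed-parameter algorithm for that problem. First I would project the rods of $S$ to the $xy$-plane and build the planar arrangement of the $n$ resulting straight segments; after a generic perturbation (or an explicit treatment of the degenerate cases where several rods share a vertical line or a common crossing) the projections meet in $O(n^2)$ transversal crossings, and each rod is subdivided by its crossings into a sequence of \emph{pieces}. At a crossing $x$ where rod $a$ lies above rod $b$, the depth relation forces a ``descent'' from $a$ to $b$, and a depth cycle is exactly a cyclic chain of rods that descends at each step and travels along a rod between consecutive descents.

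This suggests the following mixed graph $M$: introduce a vertex $(s,x)$ for every rod $s$ and every crossing $x$ on $s$; for each crossing $x$ with $a$ above $b$ add the directed arc $(a,x)\to(b,x)$ (the descent); and for consecutive crossings $x,y$ along a rod $s$ join $(s,x)$ and $(s,y)$ by an undirected edge subdivided by a fresh \emph{piece-vertex} that represents the corresponding piece of $s$. The key structural claim is that the cycles of $M$ are in exact correspondence with the depth cycles of $S$: the undirected edges of a fixed rod form a path, so there are no purely undirected cycles, and two rods cross at most once, so there are no short directed cycles; hence every cycle of $M$ must alternate descent arcs with travels along distinct rods, i.e., it is a genuine depth cycle, and conversely. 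Since a single cut placed inside a piece destroys precisely the corresponding piece-vertex, and conversely deleting a piece-vertex is realizable by one cut, I would conclude that the cutting number $\cn$ of $S$ equals the minimum number of piece-vertices whose removal makes $M$ acyclic.

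To turn this into an algorithm I would compute a feedback vertex set of $M$ that is restricted to the piece-vertices, declaring the crossing-vertices $(s,x)$ undeletable (handled either natively by the mixed feedback vertex set algorithm or by the standard $(k+1)$-fold duplication of each such vertex, so that deleting it is never profitable within budget $k$). Running the known fixed-parameter algorithm for feedback vertex set in mixed graphs, whose running time is of the form $O(47.5^{k}k!\cdot\mathrm{poly})$, on this instance of $O(n^2)$ vertices then yields the bound $O(47.5^{k}k!\,n^{8})$ stated in \thmref{fpt:rods}, where the $n^{8}$ comes from the $O(n^2)$ vertices together with the polynomial factor of the subroutine. This repairs the one-directional reduction of Aronov \etal~\cite{adbgm-ccrsha-08}, whose purely directed graph yielded the correspondence only in one direction and thus lost a factor of $2$.

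The crux, and the step I expect to be the main obstacle, is establishing the exact two-way correspondence between cuttings of $S$ and piece-restricted feedback vertex sets of $M$ — in particular the reverse direction that Aronov \etal~could not obtain. One must verify that breaking every cycle of $M$ by piece-deletions really does render the full depth relation on the resulting collection of pieces acyclic (so that a valid depth order genuinely exists), that the undeletability of the crossing-vertices faithfully models the geometric fact that a single cut can only isolate a piece and can never erase a descent, and that the degenerate configurations can be normalized without changing $\cn$. Once this correspondence is pinned down, the fixed-parameter claim of \thmref{fpt:rods} follows directly from the mixed-graph feedback vertex set algorithm.
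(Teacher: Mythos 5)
You have a genuine modeling error, and it sits exactly at the step you flag as the crux. You normalize cuts \emph{away} from crossings (into piece interiors, with crossing vertices undeletable), whereas the paper normalizes cuts \emph{onto} crossings: it observes that any interior cut can be shifted to an adjacent intersection point while remaining valid, builds the mixed graph whose only vertices are the crossing points (one vertex per rod per crossing, so each crossing contributes two deletable vertices joined by the descent arc, with no piece-vertices at all), and lets the feedback vertex set delete crossing vertices directly (\lemref{cuts:fvs}). Your direction of normalization is lossy, because your claimed ``geometric fact'' --- that a single cut can only isolate a piece and can never erase a descent --- is false under the problem's conventions: cutting a rod \emph{at} a crossing point removes the above/below relation at that crossing (this is the convention the paper relies on when it asserts that cutting every segment at every intersection point eliminates all depth cycles). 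Concretely, take rods $s, t, u_1, u_2$ with $s \succ t$ at a crossing $x$; let $u_1 \succ s$ at a crossing left of $x$ along $s$ and $t \succ u_1$ at a crossing left of $x$ along $t$, and place $u_2$ symmetrically on the right. The two depth cycles $s \succ t \succ u_1 \succ s$ and $s \succ t \succ u_2 \succ s$ share only the descent at $x$: a single cut at $x$ kills both, so the cutting number is $1$, but in your graph $M$ the vertices $(s,x)$ and $(t,x)$ are undeletable and the two cycles traverse disjoint sets of piece-vertices, so your piece-restricted feedback vertex set has size $2$. Your algorithm would therefore reject yes-instances; at best your model gives a $2$-approximation (a crossing cut can be simulated by two interior cuts isolating the sliver containing the crossing), which is precisely the loss of Aronov \etal~\cite{adbgm-ccrsha-08} that the paper's construction is designed to avoid.

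There is also a secondary technical flaw: even granting your model, the $(k+1)$-fold duplication trick does not enforce undeletability for feedback vertex set the way it does for hitting-set-type problems. Duplicating a crossing vertex $(s,x)$ into copies with identical (undirected) neighborhoods creates new undirected $4$-cycles through any two copies and two common neighbors, so the duplicated graph contains cycles absent from $M$, and \thmref{fpt:fvs:mg} as cited is for the plain, unweighted mixed-graph problem with no native support for forbidden vertices. The paper sidesteps both issues at once: since its vertices \emph{are} the crossing points and all of them are deletable, the correspondence of \lemref{cuts:fvs} is an exact equivalence in both directions, and \thmref{fpt:fvs:mg} applies off the shelf to the $O(n^2)$-vertex graph, yielding the stated $O(47.5^k k!\, n^8)$ bound.
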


\section{Proof of \thmref{fpt:rods}}

\subsection{Mixed graphs}
A mixed graph $G$ is defined by the tuple $(V,E,A)$, where $V$
is a collection of $n$ vertices, $E$ is a collection of undirected
edges, and $A$ is a collection of directed arcs. For each edge
$e \in E$ with endpoints $u, v \in V$, we denote $e$ by $\set{u,v}$.
Each arc $a \in A$ with tail $u \in V$ and head $v \in V$ is 
denoted by $(u,v)$. The \emph{total degree} of a vertex $v$ is the
number of edges and arcs adjacent to $v$.
A \emph{cycle} in $G$ consists of a sequence
of distinct vertices $v_1, \ldots, v_\ell$ in $G$, where $v_\ell = v_1$
and for each $1 \leq i \leq \ell-1$ either $\set{v_i, v_{i+1}} \in E$
or $(v_i, v_{i+1}) \in A$. A subset of vertices $S \subseteq V$ is
a feedback vertex set for $G$ if and only if 
$G - S$ is acyclic. We will use the following result of 
\cite{bl-fvsmg-11}.

\begin{theorem}[\cite{bl-fvsmg-11}]
  \thmlab{fpt:fvs:mg}
  Let $G$ be a mixed graph on $n$ vertices. Given an integer parameter
  $k$, there is an algorithm which decides if $G$ contains a feedback
  vertex set of size at most $k$, and runs in time $O(47.5^k k! n^4)$.
\end{theorem}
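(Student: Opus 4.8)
The plan is to establish fixed-parameter tractability by iterative compression, reducing the decision problem to a structured \emph{disjoint} subproblem that is then solved by a separator-based branching algorithm generalizing the directed feedback vertex set method of Chen \etal~\cite{cllor-fptdfvs-08}. First I would process the vertices in an arbitrary order $v_1, \ldots, v_n$ and maintain, for each prefix, a feedback vertex set of size at most $k$ for the induced subgraph. When the prefix grows by one vertex, the stored solution together with the new vertex is a feedback vertex set of size at most $k+1$; the task is then to \emph{compress} it to size $k$. This outer loop runs $n$ times and is responsible for one factor of $n$ in the claimed $n^4$ bound.

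Given a feedback vertex set $W$ of size at most $k+1$, the compression step proceeds by guessing the part $W \cap S$ of the desired smaller solution $S$ (at most $2^{k+1}$ choices), deleting it, and lowering the budget. What remains is the disjoint version: find a feedback vertex set $S$ of size at most $k$ with $S \cap W' = \emptyset$, where $W' = W \setminus S$ and $G - W'$ is already acyclic. The point of this reduction is the strong structural guarantee on $G - W'$: it is an acyclic mixed graph, so every cycle of $G$ must visit $W'$, and the undirected edges of $G - W'$ form a forest. This lets me organize the problem around the vertices of $W'$ and the short $W'$-free segments connecting them.

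The core of the proof is solving disjoint mixed FVS within the required time. Here I would generalize the skew-separator / important-separator technique that drives the directed case: after fixing an ordering of $W'$, finding a disjoint solution reduces to computing minimum vertex cuts that separate, in a consistent way, the out-neighbourhoods from the in-neighbourhoods of the $W'$-vertices, and the guessing of this ordering is exactly what contributes the $k!$ factor. The essential new difficulty --- and the step I expect to be the main obstacle --- is that an undirected edge behaves like a pair of opposing arcs but must \emph{not} be treated as a directed $2$-cycle, since a single undirected edge is not itself a forbidden cycle; hence the tempting ``replace each edge by two arcs'' reduction to directed FVS is simply incorrect, which is precisely why the mixed case needs a dedicated argument. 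The technical heart is therefore to extend the separator analysis so that undirected edges are routed correctly through the cut structure --- hitting every genuine mixed cycle that alternates between edges and arcs, while not over-counting single edges --- and to bound the resulting branching so that, together with the $2^{k+1}$ compression guesses and the directed-FVS base, the exponential factor stays at $47.5^k$. Combining the per-instance separator computations (costing $n^3$) with the $n$ iterations of iterative compression then yields the stated $O(47.5^k k! n^4)$ running time.
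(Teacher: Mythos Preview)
The paper does not prove this theorem at all: it is quoted verbatim as a black-box result from \cite{bl-fvsmg-11} and used only as a subroutine in the proof of \thmref{fpt:rods}. There is therefore no ``paper's own proof'' to compare your proposal against; the intended content here is simply a citation.

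That said, your sketch is a reasonable high-level outline of the Bonsma--Lokshtanov argument (iterative compression, guessing the intersection with the old solution, reducing to a disjoint variant, and then a separator-based branching that contributes the $k!$ factor), and you correctly identify the key obstacle that distinguishes the mixed case from the directed one: replacing each undirected edge by a pair of opposite arcs creates spurious $2$-cycles, so the directed FVS algorithm cannot be invoked directly. For the purposes of this paper, however, none of that is needed --- a one-line citation suffices.
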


\subsection{Reducing cutting cycles to $\FVS$ in mixed graphs}
\newcommand{\pr}[1]{\overline{#1}}

For a segment $s$ in $\Re^3$, let $\pr{s}$ be the projection
of $s$ onto the $xy$-plane. Similarly, let $\pr{S} = \Set{\pr{s}}{s \in S}$
for our collection of segments $S$. Given $s \in S$, let $I(s)$ be
the set of intersection points in the arrangement $\ArrX{\pr{S}}$ 
along the segment $\pr{s}$. We order these intersection points $I(s)$
along $\pr{s}$ arbitrarily and label them 
$p^s_1, \ldots, p^s_\ell \in \Re^2$, where $\ell = \cardin{I(s)}$.

Observe that by cutting each segment $s$ at each of its intersection
points (with the appropriate $z$-coordinate in $\Re^3$), we remove all 
depth cycles (and make $O(n^2)$ cuts). In particular,
it suffices to focus on cutting segments \emph{only} at intersection
points in order to remove depth cycles. Indeed, any solution which cuts 
a segment $s$ at a point which is a non-intersection can be shifted
along $s$ until it meets an intersection, and remains a valid solution.

Equipped with these observations, we define the following mixed 
graph $G_S = (V,E,A)$ (see also \figref{cuts}):
\begin{compactenumi}
  \item for each $s \in S$, 
  define the vertex set $V_s = \{v^s_1, \ldots, v^s_{\cardin{I(s)}}\}$, where
  the vertex $v^s_i$ represents the $i$th intersection point
  $p^s_i$ in $I(s)$. We set $V = \bigcup_{s \in S} V_s$;
  
  \item for each $s \in S$, 
  add the undirected edges $\set{v^s_i, v^s_{i+1}}$ 
  for $1 \leq i \leq \cardin{I(s)}-1$ to $E$;
  
  \item for two segments $s \succ t$ for which 
  $\pr{s} \cap \pr{t} \neq \varnothing$, 
  suppose that $p^s_i = p^t_j$ for some integers
  $i, j$ (i.e., $\pr{s} \cap \pr{t}$ is the $i$th intersection point
  along $\pr{s}$, and $\pr{s} \cap \pr{t}$ is the $j$th intersection point
  along $\pr{t}$). Then we add the directed arc $(v^s_i, v^t_j)$ to $A$.
\end{compactenumi}

\begin{figure}
  \centerline{
    \centering        
    \includegraphics[page=3,scale=0.8]{figs/rods-ex}%
    \hspace{25pt}{\color{MJLightGrey}\vrule}\hspace{25pt}%
    \includegraphics[page=1,scale=0.6]{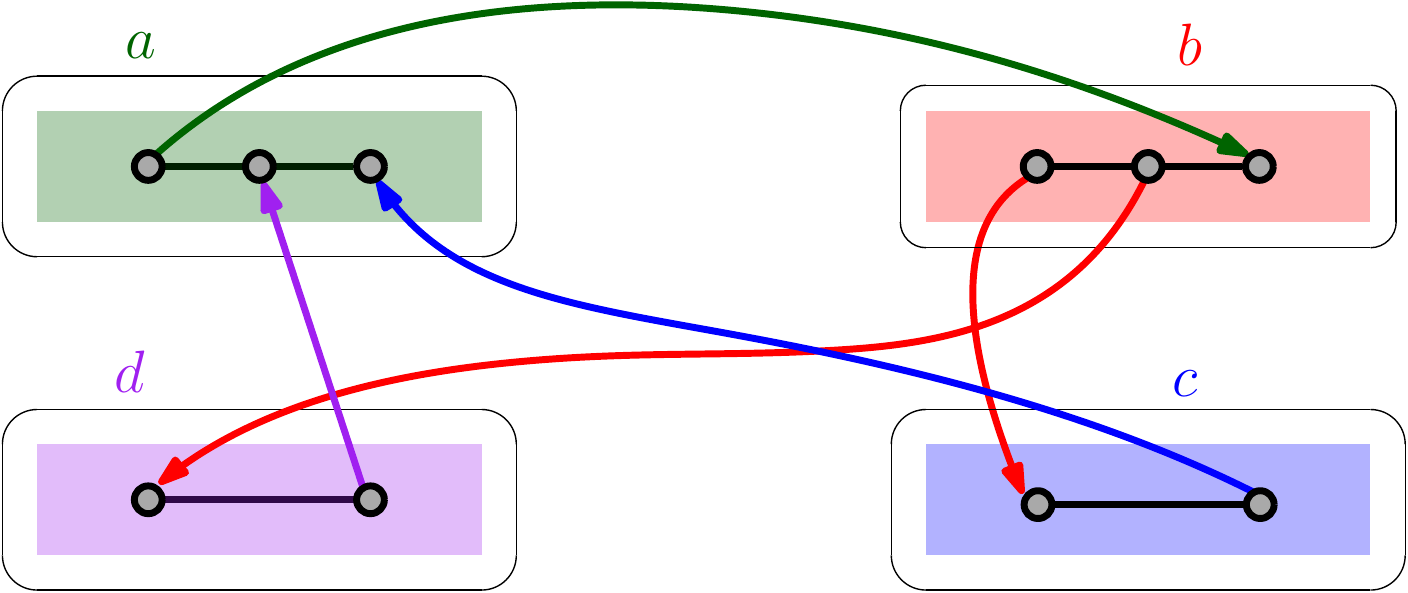}%
  }
  \caption{A set of four segments $S$ and the
  corresponding mixed graph $G_S$.}
  \figlab{cuts}%
\end{figure}

\begin{lemma}
  \lemlab{cuts:fvs}
  All cycles in $S$ can be eliminated with at most $k$ cuts if
  and only if the mixed graph $G_S$ admits a feedback vertex set
  of size at most $k$.
\end{lemma}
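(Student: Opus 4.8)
The plan is to set up an exact correspondence between cuts made at intersection points and vertices of $G_S$, and then to show that the depth cycles surviving a family of cuts are precisely the mixed-graph cycles surviving the deletion of the corresponding vertices. Concretely, I identify the operation of cutting a segment $s$ at its $i$th intersection point $p^s_i$ with the removal of the vertex $v^s_i$ from $G_S$; this is a bijection between candidate cuts (those placed at intersection points) and vertices of $G_S$, so a family of such cuts of size $m$ corresponds to a vertex set $U$ with $\cardin{U}=m$. By the observation preceding the lemma, any solution may be taken to cut only at intersection points without increasing the number of cuts, so it suffices to prove: for every $U\subseteq V$, the collection $S'$ obtained by cutting $S$ at the points indexed by $U$ has a depth cycle if and only if $G_S-U$ contains a (mixed) cycle.

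The heart of the argument is this correspondence. First I would turn a depth cycle $s_1\succ s_2\succ\cdots\succ s_r\succ s_1$ in $S'$ into a closed walk in $G_S-U$. Let $c_a$ denote the crossing witnessing $s_a\succ s_{a+1}$ (indices mod $r$). For each $a$, the piece of $s_a$ used by the cycle contains both $c_{a-1}$ and $c_a$ in its relative interior, so on the original segment $s_a$ there is no cut at $c_{a-1}$, at $c_a$, or strictly between them; hence the undirected path $v^{s_a}(c_{a-1})\rightsquigarrow v^{s_a}(c_a)$ through consecutive intersection vertices survives in $G_S-U$. Splicing these paths with the arcs $v^{s_a}(c_a)\to v^{s_{a+1}}(c_a)$ (which exist since $s_a\succ s_{a+1}$ at $c_a$ and neither endpoint is cut) yields a closed walk, hence a cycle. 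Conversely, I would read a mixed cycle off as a depth cycle: the only undirected edges join consecutive intersection points of a single segment, so every maximal undirected sub-path stays within one segment, and segments are switched only along arcs $v^s_i\to v^t_j$, each of which certifies $s\succ t$. Since the undirected edges of a fixed segment form a simple path rather than a cycle, every mixed cycle uses at least one arc, and listing the segments in the order their arcs are traversed gives a cyclic chain $s^{(1)}\succ\cdots\succ s^{(r)}\succ s^{(1)}$. Because the traversed vertices avoid $U$, none of the crossings involved is cut and each visited piece is uncut between its two bounding crossings, so the chain is a genuine depth cycle in $S'$.

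With the correspondence in hand, both directions of the lemma follow. For the forward direction, take at most $k$ cuts eliminating all depth cycles; after shifting each to an intersection point (keeping the count at most $k$ and preserving validity), let $U$ be the associated vertex set. Since $S'$ has no depth cycle, the correspondence gives that $G_S-U$ is acyclic, so $U$ is a feedback vertex set with $\cardin{U}\le k$. For the reverse direction, take a feedback vertex set $U$ with $\cardin{U}\le k$ and cut $S$ at the intersection points indexed by $U$; since $G_S-U$ is acyclic, the correspondence shows $S'$ has no depth cycle, so at most $k$ cuts suffice.

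The main obstacle is making the correspondence exact at the crossing points themselves, that is, pinning down what happens when a cut coincides with a crossing. The cleanest remedy, which I would adopt, is the convention that a crossing contributes to $\succ$ only when it lies in the relative interior of both segments (a \emph{proper} crossing); cutting $s$ exactly at $p^s_i$ then both severs the traversal of $s$ there and demotes that crossing to a mere endpoint touching, matching the effect of deleting $v^s_i$, which removes its two incident undirected edges together with every arc through it. I would also record the standard general-position assumption that no three projected segments pass through a common point, so each intersection point lies on exactly two segments and indexes a single arc; degeneracies can be removed by an infinitesimal perturbation that leaves the set of depth cycles unchanged. Checking that shifting a non-intersection cut onto an adjacent crossing never destroys validity—it only deletes more of $G_S$—is the remaining routine step that ties the two halves together.
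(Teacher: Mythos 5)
Your proof is correct and takes essentially the same route as the paper: both rest on translating depth cycles of the cut configuration into mixed cycles of $G_S-U$ and back, by splicing the arcs witnessing $s\succ t$ with the surviving undirected paths along each segment, which is exactly the paper's argument (the paper proves the feedback-vertex-set-to-cuts direction by this contradiction and notes the converse is symmetric). Your added care---extracting a genuine cycle from the closed walk, the general-position and cut-at-a-crossing conventions, and checking that shifting a cut onto a neighboring intersection point only deletes more of $G_S$---fills in details the paper leaves implicit but does not change the approach.
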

\begin{proof}
  Let $U \subseteq V$ be a feedback vertex set for $G_S$.
  Since each vertex in $U$ corresponds to an intersection
  point $p^s_i$ along some segment $s \in S$, we cut $s$ at the
  point with $xy$-coordinates $p^s_i$ and the appropriate 
  $z$-coordinate. Let $C$ denote the resulting collection of cuts. 
  Clearly, the number of cuts we make is $\cardin{C} = \cardin{U} \leq k$. 
  We claim that $C$ eliminates all depth cycles. Suppose not, then
  there is a cycle consisting of (potentially cut) segments
  $s_1 \succ s_2 \succ \ldots \succ s_p$, where
  $s_p = s_1$. Since $s_i \succ s_{i+1}$ for $1 \leq i \leq p-1$,
  there is a directed edge $(v^{s_i}_{j(i)}, v^{s_{i+1}}_{j'(i+1)})$
  for each $i$ and indices $1 \leq j(i), j'(i) \leq \cardin{I(s_i)}$.
  Here, the index $j(i)$ (resp.~$j'(i)$) is chosen such that
  $\pr{s_i} \cap \pr{s_{i+1}}$ (resp.~$\pr{s_i} \cap \pr{s_{i-1}}$) 
  is the $j(i)$th (resp.~$j'(i)$th) intersection point
  along $\pr{s_i}$.
  Since the vertices of $V_{s_{i+1}}$ are connected by an undirected path, 
  we can travel from the vertex $v^{s_{i+1}}_{j'(i+1)}$ to the vertex 
  $v^{s_{i+1}}_{j(i+1)}$ along an undirected path $P_i$. 
  This corresponds to moving along the segment
  $s_{i+1}$. Note that 
  since each vertex in $V$ is adjacent to exactly one arc, we have that
  $v^{s_{i}}_{j'(i)} \neq v^{s_{i}}_{j(i)}$ for all $i$. 
  Next, we can take another directed edge 
  $(v^{s_{i+1}}_{j(i+1)}, v^{s_{i+2}}_{j'(i+2)})$. Continuing in 
  this fashion, we can stitch together a cycle which concatenates
  the directed edge $(v^{s_i}_{j(i)}, v^{s_{i+1}}_{j'(i+1)})$
  followed by the undirected path $P_{i+1}$ from $v^{s_{i+1}}_{j'(i+1)}$
  to $v^{s_{i+1}}_{j(i+1)}$, for $i = 1, \ldots, p-1$,
  to obtain a cycle in $G_S - U$. A contradiction.
  
  Conversely, let $C$ be a collection of cuts which eliminates
  all depth cycles in $S$. By the above discussion, without loss
  of generality we can assume that $C$ cuts the segments of $S$
  at intersection points. Since each intersection point is associated
  to a vertex in $G_S$, we add the vertex $v^s_i$ to $U$ if and 
  only if segment $s$ was cut at the intersection point
  $p^s_i$. Clearly, this collection of vertices $U$ satisfies
  $\cardin{U} = \cardin{C} \leq k$. The argument that
  $U$ is a feedback vertex set for $G_S$ is similar to above. If
  $U$ is a not a solution, then there is a cycle in $G_S$
  which corresponds to a depth cycle in $S$ after making the
  cuts in $C$, a contradiction.
\end{proof}

\RestatementOf{\thmref{fpt:rods}}{\ThmFPTRods}
\begin{proof}
  The algorithm is straightforward. Given $S$, apply the above
  reduction to obtain a mixed graph $G_S$. Note that $G_S$ has
  size $O(n^2)$ (there are $O(n^2)$ vertices, and each vertex has
  total degree at most three), and can be easily computed in $O(n^2)$ 
  time.
  Next, run the algorithm of \thmref{fpt:fvs:mg} on $G_S$ to
  obtain a feedback vertex set $U$. One can map this set $U$ back
  to a cut set $C$ for $S$ by \lemref{cuts:fvs}. The total running time
  of the algorithm is 
  $O(47.5^k k! \cardin{G_S}^4) = O(47.5^k k! n^8)$.
\end{proof}

%%%%%%%%%%%%%%%%%%%%%%%%%%%%%%%%%%%%%%%%%%%%%%%%%%%%%%%%%%%%%%%%%%%%%%%%%%%
%%%%%%%%%%%%%%%%%%%%%%%%%%%%%%%%%%%%%%%%%%%%%%%%%%%%%%%%%%%%%%%%%%%%%%%%%%%
%%%%%%%%%%%%%%%%%%%%%%%%%%%%%%%%%%%%%%%%%%%%%%%%%%%%%%%%%%%%%%%%%%%%%%%%%%%

\paragraph*{Acknowledgements.}
Thanks to Sariel Har-Peled for suggesting the problem and subsequent
discussions. Thanks also to Chandra Chekuri for discussions on feedback 
vertex set and its variants.

%\newpage

\BibTexMode{%
   \bibliographystyle{alpha}%
   \bibliography{master}%
}%

\BibLatexMode{\printbibliography}

\end{document}